\documentclass[aps,a4paper,reprint,eprint,onecolumn,superscriptaddress,longbibliography]{revtex4-2}
% use preprint or reprint, further options: linenumbers, nobalancelastpage
%\usepackage[utf8]{inputenc}
%\usepackage[english]{babel}
%\usepackage[utf8]{inputenc}
%\usepackage{systeme}
%\usepackage[fleqn]{amsmath}
%\usepackage{amssymb}
%\usepackage[separate-uncertainty=true, multi-part-units=single]{siunitx}
%\usepackage[inline,shortlabels]{enumitem}
\usepackage[T1]{fontenc}
\usepackage[utf8]{inputenc}
\usepackage[english]{babel}
\usepackage{indentfirst}
\usepackage{lmodern}
\usepackage[intlimits]{amsmath}
\usepackage{bm}
\usepackage{amsthm, amssymb, amsfonts}
\usepackage{braket}
\usepackage{todonotes}
\usepackage{chngcntr}
\usepackage{csquotes}

\newcommand{\R}{\mathbb{R}}

\newcommand{\ee}{\mathrm{e}}
\newcommand{\rd}{\mathrm{d}}
\newcommand{\bmr}{{\bm r}}
\newcommand{\eps}{\varepsilon}

\newtheorem*{theorem*}{Theorem}
\newtheorem{theorem}{Theorem}

\newtheorem*{cor*}{Corollary}
\newtheorem{lemma}[theorem]{Lemma}
\newtheorem{definition}{Definition}

\newtheorem*{ass*}{Assumption}

\usepackage{graphicx,grffile,subfigure}
\graphicspath{{figures/}}
\newlength\figwidth
\setlength\figwidth{246pt} % about 3.4in, \linewidth in two-column layout

% hyperrefs and URLs
\usepackage{url}
%\usepackage[table,usenames,dvipsnames]{xcolor}
%\hyperset-up{colorlinks=true, linkcolor=BrickRed,
%  urlcolor=blue!50!black, citecolor=blue!50!black}

% clever cross-referencing
\usepackage[capitalize]{cleveref}
\crefrangelabelformat{equation}{(#3#1#4)$-$(#5#2#6)}

% customise formatting of paragraph header
%\makeatletter
%\def\paragraph{%
%  \@startsection
%    {paragraph}{4}{\parindent}{\z@}{-1.5em}%
%    {\normalfont\normalsize\itshape}%
%}%
%\makeatother

\setlength {\marginparwidth }{2cm}

% ===============================================================================================
% ===============================================================================================
% ===============================================================================================
% ===============================================================================================

\begin{document}

\title{Scaling law for the size dependence of a {{finite-range}} quantum gas}

\newcommand\FUBaffiliation{\affiliation{Freie Universität Berlin, Institute of Mathematics, Arnimallee 6, 14195 Berlin, Germany}}
\newcommand\BTUaffiliation{\affiliation{Brandenburgische Technische Universität Cottbus-Senftenberg, Institute of Mathematics, Konrad-Wachsmann-Allee 1, 03046 Cottbus, Germany}}
\author{Luigi Delle Site}
\email{luigi.dellesite@fu-berlin.de}
\FUBaffiliation
\author{Carsten Hartmann}
\email{carsten.hartmann@b-tu.de}
\BTUaffiliation

\begin{abstract}
In a recent work [Reible {\it et al.} Phys. Rev. Res. {\bf 5}, 023156, 2023], it has been shown that the mean particle-particle interaction across an ideal surface that divides a system into two parts, can be employed to estimate the size dependence for the thermodynamic accuracy of the system. In this work we propose its application to systems with {finite} range interactions that models a dense quantum gases and derive an approximate size-dependence scaling law. In addition, we show that the application of the criterion is equivalent to the determination of a free energy response to a perturbation. The latter result confirms the complementarity of the criterion to other estimates of finite-size effects based on direct simulations and empirical structure or energy convergence criteria.
\end{abstract}

\maketitle

% ===============================================================================================
% ===============================================================================================
% ===============================================================================================
\section{Introduction}
Quantum gases represent a highly active field of research due to their unusual physical properties and potential applications in science and technology, one such example being ultracold quantum gases that can be used as quantum simulators (see, e.g., Ref.\citenum{cold1} and the references therein). To properly investigate the complexity of such systems often requires an interplay between experimental and computational techniques\cite{revmodphys}. In doing so, the physical consistency of the theoretical and computational models is a key issue that can be controlled by the system size. In computer simulation, there is always a trade-off between the physical and empirical consistency---that requires a fairly  large number of particles---and the computational cost associated with simulating a large system as many algorithms scale at least polynomially (sometimes even exponentially) in the number of degrees of freedom. 
Fortunately, many interesting systems are rather dilute, so few particles in a relatively large box are sufficient to properly reproduce thermodynamic (bulk) properties \cite{qmc}.

The situation is different for dense quantum gases that are {\it per se} a field of large interest across various disciplines, from chemistry to mathematics \cite{dense1,dense2,dense3,dense4,dense5}, or even astrophysics\cite{astro}. Here using few particles when simulating bulk properties may lead to inconsistent results, and thus \textit{a priori} estimates to quantify finite size effects are crucial to assess the accuracy and reliability of simulation results\cite{ceper}.

The aim of this paper is to provide such estimates for dense systems with {finite} range interactions in order to quantify finite size effects related to the thermodynamic properties of the system. 
The proposed approach relies on a recently proved two-sided Bogoliubov inequality for classical\cite{jstatmecc} and quantum \cite{lmp} systems. The theorem rigorously defines upper and lower bounds to the free energy cost for separating a many-particle system into non-interacting and independent subsystems, and so allows to derive a criterion to assess the accuracy of thermodynamic properties as a function of the system size\cite{prr}. To apply the criterion, it is sufficient to 
calculate the mean particle-particle interaction potential across a dividing surface. Given a suitable reference energy scale, the criterion provides an upper bound for the relative error associated with separating a large system into subsystems that can be simulated independently (at lower computational cost). 

In many cases, evaluating the relative error boils down to a well-behaved numerical integration (i.e.~quadrature or cubature) problem in dimension 3 or 6 that involves a radial distribution function that is typically available from either experimental data or the relevant literature;  for systems with a specific range of interactions or correlations, systematic approximations that can even lead to analytical formulas. We will consider such a situation here and derive a scaling law for the finite-size effects associated with the separation of a dense quantum gas into two independent subsystems. In addition, we look at the error criterion from a conceptual perspective and relate it to thermodynamic response functions.

The outline of the  article is as follows: In Section \ref{sec:relError}, we review the two-sided Bogoliubov inequality for the interface free energy and explain how to derive a computationally feasible, semi-empirical expression to measure the validity of a simulation set-up.  In Section \ref{sec:scalingLaw}, we derive a simple scaling law for the relative error of a dense quantum gas with {finite-range} interactions, Section \ref{sec:thermodynResponse} is devoted to a discussion of the error bounds for the interdomain energy in relation to certain thermodynamic response functions in a perturbative framework. Conclusions are given in Section \ref{sec:fin}. The article contain two appendices: In Appendix \ref{sec:concavity}  we discuss certain properties of the free energy as a function of the interdomain perturbation parameter, Appendix \ref{sec:matrices} records some matrix identities used in the linear oscillator example of Section \ref{ssec:Gaussians}.

\section{A mathematically sound criterion to estimate thermodynamic accuracy as a function of the system size}\label{sec:relError}

In this Section, we formulate the mathematical set-up and briefly review the basic result that allows to assess the relevance of finite-size effects in a molecular simulation. 
To this end, we consider a system of $N$ interacting particles with coordinates $\bmr_{1},\dotsc,\bmr_{N}\in\Omega\subset\R^{d}$, $d=1,2,3$ where $\Omega$ is some bounded subset. The interactions are governed by a Hamiltonian $H\colon\Omega\to\R$, and we assume that the state of the systems is described by a stationary probability density function $f=f(\bmr_{1},\dotsc,\bmr_{N})$ (or: density matrix in the quantum case) that is of the form $f\propto \exp(-\beta H)$ for some $\beta>0$.  

Here we consider a situation in which the particles can be divided into two non-interacting parts, with $n$ particles confined to a volume $\Omega_1\subset\R^d$ and governed by a Hamiltonian $H_1$, and the remaining $m=N-n$ particles in $\Omega_2\subset\R^3$ with Hamiltonian $H_2$. The state inside the two compartments is assumed to be described by probability density functions $f_{1}(\bmr_{1}, \dotsc, \bmr_{n})$ and  $f_{2}(\bmr_{n+1}, \dotsc, \bmr_{N})$ (or: density matrices in the quantum case) of the form $f_i\propto\exp(-\beta H_i)$, $i=1,2$. 

We want to compare the free energy, $F$, of the full systems with the free energy of the system with Hamiltonian $H$ with the free energy, $F_0$, after the the systems has been divided into two non-interacting subsystems with Hamiltonians $H_{1}$ and $H_{2}$. We set  $H=H_{0}+U$, with $H_0 = H_{1}+H_{2}$ where $U$ is the energy of the  interdomain interaction across the separating interface (e.g.~a small tubular neighborhood of the common boundary $\Omega_1\cap\Omega_2$ of $\Omega_1$ and $\Omega_2$), and we consider the free energy difference associated with adding or removing the interaction term: 

\begin{definition}[Interface free energy]\label{def:DeltaF}
    The free energy cost associated with separating the total system with Hamiltonian $H$ into two independent subsystems with Hamiltonians $H_1$ and $H_2$ is defined as
\begin{equation}\label{deltaF}
        \Delta F := F - F_0 = - \beta^{-1} \log\left(\frac{Z}{Z_0}\right) \ .
    \end{equation}
  with 
\begin{equation}
    Z = \int_{\Omega} \ee^{-\beta H(\bmr)}\,\mathrm{d}\bmr\,,\quad Z_0 = \int_{\Omega} \ee^{-\beta H_0(\bmr)}\,\mathrm{d}\bmr\,.
\end{equation} 
\end{definition}

In Refs. \citenum{jstatmecc,lmp}, the following upper and lower bounds to the free energy cost, $\Delta F$, have been proved that hold for partitioning either a classical or a quantum system of particles into two (or more) non interacting subsystems:
\begin{theorem}[Two-sided Bogoliubov inequality]\label{thm:bogo} It holds 
 $$\mathbf{E}_f[U] \le \Delta F \le \mathbf{E}_{f_{1}f_{2}}[U] \,,$$
 where $\mathbf{E}_f[\cdot]$ or $\mathbf{E}_{f_{1}f_{2}}[\cdot]$ are the expectations with respect to the probability density functions $f$ or $f_1f_2$.  
\end{theorem}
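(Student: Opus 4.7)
The plan is to derive both sides of the inequality from a single convexity principle---the scalar Jensen inequality in the classical case, or the Peierls--Bogoliubov inequality in the quantum case---applied in two complementary directions. The algebraic observation that makes this possible is that, since $H = H_0 + U$, the ratio $Z/Z_0$ can be written either as an $f_1 f_2$-expectation of $e^{-\beta U}$ or, equivalently, as the reciprocal of an $f$-expectation of $e^{+\beta U}$. Each representation, combined with the convexity of $x \mapsto e^{-\beta x}$, yields one of the two bounds.

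In the classical setting I would first exploit $e^{-\beta H} = e^{-\beta H_0}\,e^{-\beta U}$ together with $f_1 f_2 = e^{-\beta H_0}/Z_0$ to write
\begin{equation*}
    \frac{Z}{Z_0} \;=\; \mathbf{E}_{f_1 f_2}\bigl[e^{-\beta U}\bigr] \;\ge\; e^{-\beta\,\mathbf{E}_{f_1 f_2}[U]},
\end{equation*}
by Jensen's inequality; taking $-\beta^{-1}\log$ of both sides yields the upper bound $\Delta F \le \mathbf{E}_{f_1 f_2}[U]$. The lower bound follows by swapping the roles of the two Gibbs states: $e^{-\beta H_0} = e^{-\beta H}\,e^{+\beta U}$ and $f = e^{-\beta H}/Z$ together give
\begin{equation*}
    \frac{Z_0}{Z} \;=\; \mathbf{E}_f\bigl[e^{+\beta U}\bigr] \;\ge\; e^{\beta\,\mathbf{E}_f[U]},
\end{equation*}
so that $-\Delta F = \beta^{-1}\log(Z_0/Z) \ge \mathbf{E}_f[U]$, i.e.~$\Delta F \ge \mathbf{E}_f[U]$.

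The main obstacle is the quantum case, because the operator identity $e^{-\beta(H_0+U)} = e^{-\beta H_0}\,e^{-\beta U}$ fails whenever $[H_0,U]\neq 0$, so the factorisations used above are no longer available. I would replace Jensen's inequality by its noncommutative analogue, the Peierls--Bogoliubov inequality
\begin{equation*}
    \log\Tr\, e^{A+B} \;\ge\; \log\Tr\, e^{A} \;+\; \frac{\Tr(e^{A} B)}{\Tr\, e^{A}},
\end{equation*}
which rests on the convexity of $A \mapsto \log\Tr\,e^{A}$. Setting $A = -\beta H_0$ and $B = -\beta U$ reproduces the upper bound, while $A = -\beta H$ and $B = +\beta U$ (so that $A+B = -\beta H_0$) gives the lower bound. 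This is precisely the route followed in Refs.~\citenum{jstatmecc,lmp}, to which I would defer for the proof of the noncommutative convexity step itself. Questions of domain and trace-class regularity, which are trivial in the finite classical case, would have to be addressed there as well.
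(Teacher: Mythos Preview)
The paper does not itself prove Theorem~\ref{thm:bogo}; it only states the result and defers to Refs.~\citenum{jstatmecc,lmp}. Your proposal is precisely the standard argument one expects to find there---Jensen's inequality applied to the two dual representations $Z/Z_0 = \mathbf{E}_{f_1f_2}[e^{-\beta U}]$ and $Z_0/Z = \mathbf{E}_f[e^{\beta U}]$ in the classical case, and the Peierls--Bogoliubov trace inequality in the quantum case---so there is nothing to contrast.

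One slip to correct: in the lower-bound paragraph you write $-\Delta F = \beta^{-1}\log(Z_0/Z)$, but since $\Delta F = -\beta^{-1}\log(Z/Z_0) = \beta^{-1}\log(Z_0/Z)$, the minus sign on the left is spurious. Your subsequent ``i.e.~$\Delta F \ge \mathbf{E}_f[U]$'' is nevertheless the right conclusion, so this is a typo rather than a logical error; just drop the stray minus.
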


It has been further shown that for two-body potentials the criterion of Theorem \ref{thm:bogo} can be reduced to the calculation of one- and two-particle integrals (in dimension $d$ and $2d$):\cite{prr}
\begin{equation}
  \mathbf{E}_{f}[U]=\rho^{2}\int_{\Omega_{1}}\int_{\Omega_{2}}U(\bmr-\bmr^{'})g(\bmr,\bmr^{'})\rd\bmr\rd\bmr^{'}
\label{erhopp}
\end{equation}
with $\bmr\in\Omega_{1}$ and $\bmr^{'}\in\Omega_{2}$, $g(\bmr,\bmr^{'})$ the two-body, e.g. radial distribution function and $\rho=\frac{N}{\Omega}$, and
\begin{equation}
  \mathbf{E}_{f_{1}f_{2}}[U]=\int_{\Omega_{1}}\int_{\Omega_{2}}\rho_{1}(\bmr)\rho_{2}(\bmr^{'}) U(\bmr-\bmr^{'})\rd\bmr\rd\bmr^{'}
  \label{erhozpp}
\end{equation}
with $\bmr\in\Omega_{1}$ and $\bmr^{'}\in\Omega_{2}$, $\rho_{1}(\bmr)$ and $\rho_{2}(\bmr)$ being the position-dependent average particle density in each domain. {While the bounds in Theorem \ref{thm:bogo} require only mild assumptions to guarantee that the upper and lower bounds are finite, the aforementioned simplifications for two-body potentials rely on further assumptions. In particular, the approach applies to system with a sufficient number of particles so that the condition of isotropic system and the radial distribution function are statistically well defined. Such a condition implies that very dilute systems with a small number of particles (e.g. $N\le 10$), as in recent studies of dilute quantum gases \cite{entropy}, cannot be described by Eqs.~(\ref{erhopp}) and (\ref{erhozpp}). Moreover, the physical validity of the approach is only assured for equilibrium systems, far from a (non-equilibrium) phase transition. As it is proved later on, the creation of an interface is equivalent to stimulating the system with a reversible (i.e.~gradient-type) perturbation, thus one has to make sure that such a perturbation occurs far from the condition of a phase transition which may irreversibly destabilize the system.}

\subsection{Relative free energy difference as a quality measure}

Eqs.~\ref{erhopp} and \ref{erhozpp} can be used to define a criterion to quantify the thermodynamic accuracy for a given system size. In Ref.~\citenum{prr}, the following quality parameter that measures $\Delta F$ relative to some characteristic reference (free) energy of the system has been suggested:
\begin{equation}
  q=\frac{|\Delta F|}{|E_{\rm ref}|},
\end{equation}
where $E_{\rm ref}$ can be the free energy of the decoupled system or the average total energy. Since $\Delta F$ is typically not available, it is customary to replace $q$ by the largest relative error, i.e.  
\begin{equation}
  q_{\max}=\frac{\max\big\{|\mathbf{E}_{f}[U]|,|\mathbf{E}_{f_{1}f_{2}}[U]|\big\}
  }{|E_{\rm ref}|}.
    \label{relcrit}
  \end{equation}
  
The parameter $q_{\max}$ is a convenient and easy-to-compute measure of the thermodynamic accuracy of the model that determines the system size required to represent the bulk properties of a system within an order $\mathcal{O}(q_{\max})$.

In the particular case of an isotropic system with an uniform stationary density, Eq.~\ref{erhozpp} can be simplified even further to yield 
\begin{equation}
  \mathbf{E}_{f_{1}f_{2}}[U]=\rho^{2}\int_{\Omega_{1}}\int_{\Omega_{2}}U(\bmr-\bmr^{'})\rd\bmr\rd\bmr^{'}.
  \label{erhozppuni}
\end{equation}

Assuming that either the radial distribution function, $g(\bmr,\bmr^{'})$ or the uniform particle density $\rho$ are known from experimental data or the relevant literature (as is often the case), evaluating the expectations in Eq.~\ref{erhopp} or  Eq.~\ref{erhozppuni} reduces to a simple numerical quadrature (or: cubature) problem in dimensions $d$ or $2d$ where $d\le 3$. For simple systems it is even possible to systematically approximate the corresponding expectations, which then leads to approximate,  but rather general results (e.g. scaling laws); in our previous work we have applied this approach to the uniform interacting electron gas and derived a  basic finite-size scaling law \cite{prr}. 

Such a criterion is complementary to other criteria that account for finite size effects. The complementarity of our approach lies in the fact that $\mathbf{E}_{f}[U]$ and $\mathbf{E}_{f_{1}f_{2}}[U]$ express a thermodynamic response to the perturbation of building an interface, hence it carries a key information regarding the capability of the decoupled system to respond as it would be expected for the bulk of such a substance.
Another advantage lies in the fact that, even though the upper and lower bounds of the interface free energy have a precise analytical form, it is easily possible to derive semi-empirical scaling laws that can be compared or enriched with available empirical data; even for quantum systems, for which evaluating expectations that involve density matrices may be tedious, the resulting cubature problems are relatively straightforward to implement and solve. 

We will discuss such an example in the next section by deriving a scaling law for {system with finite-range interactions} that model quantum gases. 

\section{Quantum gas with finite range potential}
\label{sec:scalingLaw}

We consider a model of a dense quantum gas with {finite-range} interactions in dimension $d=3$.  The most practical model for analytical calculations is a finite potential well (see e.g. Ref.\citenum{revmodphys}):
\begin{equation}
V_{\rm eff}(r)=
  \begin{cases}
  	-V_{0}\,,       & \quad r \le R_{0}\\
  	0\,,  & \quad r > R_{0}
  \end{cases}
\label{potatt}
\end{equation}
Applying formulas (\ref{erhopp}) and (\ref{erhozpp}) readily yields 
\begin{equation}
  \mathbf{E}_{f}[U]=-\rho^{2}V_{0}\int_{\Omega_{1}}\int_{\Omega_{2}}\mathbf{1}_{\{|\bmr-\bmr^{'}|\le R_{0}\}}\,g(\bmr,\bmr^{'})\,\rd\bmr\rd\bmr^{'}
\label{erhoppultra}
\end{equation}
and
\begin{equation}
 \mathbf{E}_{f_{1}f_{2}}[U]=-\rho^{2}V_{0}\int_{\Omega_{1}}\int_{\Omega_{2}}\mathbf{1}_{\{|\bmr-\bmr^{'}|\le R_{0}\}}\,\rd\bmr\rd\bmr^{'}
\label{erhopzultra}
\end{equation}
The total potential energy of the system can be taken as the reference quantity, $E_{\rm ref}$, that is:
\begin{equation}
 E_{\rm ref}=|\mathbf{E}_{f}[U_{\rm tot}]|=\rho^{2}V_{0}\left|\int_{\Omega}\int_{\Omega}\mathbf{1}_{\{|\bmr-\bmr^{'}|\le R_{0}\}}\,g(\bmr,\bmr^{'})\,\rd\bmr\rd\bmr^{'}\right|.
\label{erhoppultratot}
\end{equation}
Specific data about particle density and radial distribution functions are often available in the literature (see e.g. Ref.\citenum{revmodphys}), thus an accurate determination of the total energy and of $ \mathbf{E}_{f}[U]$ and $\mathbf{E}_{f_{1}f_{2}}[U]$ can be done numerically for each specific choice of physical parameters and size of the system on a three dimensional grid by discretizing the integrals using an appropriate cubature formula. Yet, independently of the availability of the radial distribution function of the system,  the condition that $V(r)$ is {finite-range} as in Eq.~\ref{potatt} allows for estimating an approximation to the quality parameter $q$ in form of a scaling law. Specifically, $q$ can be estimated in terms of the number of effective interdomain interactions in relation to the total number of effective interactions over the whole domain, as explained in the next section.

{In the next paragraph, we will derive a simple scaling law to estimate the interdomain energy and the quality parameter $q$. This scaling law will provide rough, yet computationally cheap estimate of the magnitude of the finite range interactions that provides a general trend regarding the relevance of the size effects. %in systems with such kind of particle-particle interactions. 
It does not  depend on the details of the radial distribution function, in particular it does not rely on any specifics regarding particle-particle interactions or particle exclusion (anti-)symmetry. In order to take care of, for example, fermionic or bosonic properties one must use the explicit formulae (\ref{erhopp})--(\ref{erhozpp}) with the respective radial distribution function, which, being a two-body reduced density matrix, would describe the particle symmetries as well. Clearly, finite size effects in quantum systems may be present even in the absence of particle-particle interactions as, for example, in ideal Fermi gas under confinement\cite{confin}; our approach is beneficial when particle-particle interactions play a dominant role in the physics of the system.}

\subsection{Simple scaling law}

Let us consider $E_{\rm tot}$, that is, the absolute value of the total energy of the $N$ particles in the box of size $L^{3}$. Since we deal with an isotropic system with uniform particle density, the domain of interaction of each particle will be the surrounding sphere of radius $R_{0}$ where $R_{0}$ is the interaction cut-off radius, i.e. $V(r)=0$ for $r > R_{0}$. 

By the uniform particle density assumption and ignoring that particles can overlap, the average number of interactions for each particle is roughly given by $(N-1)\varrho$, where $\varrho=4\pi/3 (R_0/L)^3$ is the ratio between the maximum interaction volume and the box volume. Taking into account the symmetry of the particle-particle interactions (i.e.~removing double counting), it therefore follows that the total number of interactions (called ``bulk interactions'' in the following) is given by:
\begin{equation}
	n^{\rm tot}=\frac{N(N-1)}{2}\frac{4\pi}{3}\left(\frac{R_{0}}{L}\right)^{3} \approx N^2\frac{2\pi}{3}\left(\frac{R_{0}}{L}\right)^{3}.
	\label{nitot}
\end{equation}
Recalling that $\rho=N/L^3$ is the uniform particle density inside the simulation box, the last equation states that the average of the total number of interactions is essentially proportional to $N\rho$.

Next we consider the interface that divides the simulation box into two subsystems. The interface is assumed to be aligned with the $yz$-plane and to have a thickness $R_{0}$ in the $x$-direction where we assume that $R_0\ll L$.  Clearly, for values of $R_{0}$ smaller than the mean particle-particle distance, the region within which particles on one side of the interface can interact with those on the other side has volume $\Gamma=L^{2}R_{0}$. Assuming that the particle density is uniform across the interface, the number of particles in this domain is  $N_{\Gamma}=L^{2}R_{0}\rho=N R_{0}/L$. As a consequence, the average number of interdomain interactions is equal to 
\[
\frac{1}{2}\frac{N_\Gamma (N_\Gamma-1)}{2} \frac{4\pi R_0^3}{3L^2R_0} \approx N_\Gamma^2\frac{\pi }{3}  \left(\frac{R_0}{L}\right)^2 = N^2\frac{\pi}{3}  \left(\frac{R_0}{L}\right)^4\,.
\]
The factor $1/2$ in front of the leftmost expression is owed to the fact that on average only half of particles are on one side of the interface; they interact with the other half of the particles on the other side of the interface, provided that they are within distance of at most $R_{0}$. We set
\[
n_\Gamma:= N^{2}\frac{\pi}{3}\left(\frac{R_{0}}{L}\right)^{4}
\]
and call 
\begin{equation}
  E^{\rm tot}_{i}=\frac{E_{\rm tot}}{n^{\rm tot}}
  \label{nidelta}
\end{equation}
the energy per interaction. Under the above assumptions the interface free energy $\Delta F$ can be estimated by the energy per interaction multiplied by the average number of interdomain interactions, $n_\Gamma$. We call this quantity the  average interdomain energy and denote it by $E_{\rm intd}$. Then 
\begin{equation}
	E_{\rm intd} = E_{\rm tot}\frac{n_\Gamma}{n^{\rm tot}} \,.
	\label{eappr}
\end{equation}
If we let $N,L\to \infty$, such that $N/L^3 =\rho$ is kept fixed, it is plausible to assume that the difference between upper and lower bound in Theorem \ref{thm:bogo} will be negligible compared to the reference energy $E_{\rm ref}$ provided that $R_0$ grows at most linearly in $L$ with $R_0/L\to \delta$ for some sufficiently small $\delta>0$. 
\begin{ass*} We suppose that 
	\[
	\lim_{N\to\infty}\frac{|\mathbf{E}_{f_{1}f_{2}}[U] - \mathbf{E}_{f}[U]|}{E_{\rm ref}} = 0\,.
	\]
\end{ass*}

Setting $E_{\rm ref}=E_{\rm tot}$, an estimate of the quality factor $q\approx\frac{E_{\rm intd}}{E_{\rm ref}}$, that measures the strength of the finite size effect, is then, {in such a case, asymptotically equal to $q_{\max}$ defined by (\ref{relcrit})}, since 
\[
0 = \frac{|\mathbf{E}_{f}[U] - \mathbf{E}_{f}[U]|}{|E_{\rm ref}|} \le \lim_{N\to\infty} \frac{|\Delta F -  \mathbf{E}_{f}[U]|}{|E_{\rm ref}|} \le \lim_{N\to\infty}\frac{|\mathbf{E}_{f_{1}f_{2}}[U] - \mathbf{E}_{f}[U]|}{|E_{\rm ref}|} = 0
\]
and 
\[
0 = \frac{|\mathbf{E}_{f_1f_2}[U] - \mathbf{E}_{f_1f_2}[U]|}{|E_{\rm ref}|} \le \lim_{N\to\infty} \frac{|  \mathbf{E}_{f_1f_2}[U]-\Delta F|}{|E_{\rm ref}|} \le \lim_{N\to\infty}\frac{|\mathbf{E}_{f_{1}f_{2}}[U] - \mathbf{E}_{f}[U]|}{|E_{\rm ref}|} = 0
\]
As a consequence, the relative error of the model satisfies the following scaling law
\begin{equation}
	\lim_{N\to\infty} q_{\max} = \lim_{N\to\infty} \frac{n_\Gamma}{n^{\rm tot}}=\frac{\delta}{2},
	\label{scaling}
\end{equation}
with $\delta\approx R_0/L$ for sufficiently large $N$. Such a scaling law can be a practical tool for a quick estimate of the required minimal size of the system for which the corresponding thermodynamic accuracy is within a given threshold. {The only relevant quantity is the characteristic spatial scale of the potential, while the sign of the potential, for example, does not enter into the game. Overall, the simplicity of the criterion can only provide a rough estimates and trends. The approach may be nevertheless useful beyond the field of computational calculations where usually one starts from small systems' size and then upgrades it, depending on the computational resources available.} In fact, the criterion may be used to estimate \emph{a priori} how small a system could be to assure a bulk response and thus use, if possible, smaller samples than the ones currently used.

In the regime of dilute gases as discussed in Ref.\citenum{revmodphys}, with $\rho=\frac{N}{L^{3}}=\frac{10^{-6}}{R_{o}^{3}}$ (i.e., the average particle-particle distance is much larger than the interaction range), the criterion above is not really needed. In fact few particles in a very large volume are sufficient to describe the essence of the bulk properties.  However for medium and high density gases, e.g., $\frac{N}{L^{3}}=\frac{10^{-\alpha}}{R_{o}^{3}}; \alpha\le 3$, the physical relevance of finite-size effects, measured in terms of Eq.~\ref{scaling},  may be sizable and therefore has to be controlled by the computational set-up; for example, for a density of $\rho=\frac{1}{10^{3}R_{0}^{3}}$ for $N=150$ one has: $q\approx 1\%$, however, at the same density for $N=10$ one has $q\approx 2\%$. This means that regarding the statistical consistency of representing the bulk, a small system is sufficient. 

% The optimal choice, similarly  to the dilute case, may be a question of balancing the needs of efficient sampling on parallel computational facilities and computational costs (i.e. a larger number of particles are distributed better on highly efficient parallelized codes and thus leads to a more convenient numerical sampling of the phase space). \todo{I don't really understand this sentence.}

The extreme case is a gas at high density, e.g., with $\alpha=1$ where for $N=150$ one has only $q\approx 5\%$; thus even $N=150$ can be no more considered fully satisfactory.

In the text above we have discussed the physical representability of bulk properties within a statistical mechanics framework. We will now ask the question whether is it possible to relate the accuracy measure, $q$, to other macroscopic thermodynamic quantities. More specifically, we will shed some light on the interpretation of the interface energy bounds as thermodynamic response functions under the perturbation due to adding or removing an interface.

\section{Interdomain energy as a thermodynamic response function}\label{sec:thermodynResponse}

In this section we will elaborate on our interpretation of the interdomain energy as a thermodynamic response to the perturbation of building a separating interface.

\subsection{Particle insertion: $E_{\rm intd}$ and the chemical potential}

The free energy difference $\Delta F$ in Theorem \ref{thm:bogo} is essentially the change of the free energy of the system as the number of interactions changes when the system is separated in different non-interacting subsystems, since the number of interactions diminishes exactly by the number of interdomain interactions when the interaction potential $U$ is added or removed.

Let us assume that prior to separation, the total number of interactions, $n_{\rm i}$, of the system is a certain function $h=h(N)$ of the number of particles, $N$.  After the system has been separated into non-interacting subsystems, a certain number of interactions are no longer present. Recalling that $n_\Gamma$ denotes the number of interdomain interactions, the number of interactions change, $n_{\rm i}$, is given  by $\Delta n_{\rm i}= n_\Gamma$. 

Now let us consider the variation of the free energy with respect to the variation of number of interactions, i.e. we consider $\frac{\Delta F}{\Delta n_{\rm i}}$. Assuming that we can pass to the limit, 
\begin{equation}
\frac{\Delta F}{\Delta n_{\rm i}}\approx \frac{\partial F}{\partial n_{\rm i}}=\frac{1}{h'(N)}\frac{\partial F}{\partial N}=\frac{1}{h'(N)}\mu\,,    
\end{equation}
with $\mu$ the chemical potential of the system.

On the other hand, if we approximate $\Delta F$ by its upper or lower bound, i.e.~$\Delta F\approx  \mathbf{E}_{f}[U]$ or $\Delta F\approx \mathbf{E}_{f_1f_2}[U]$, then
  \begin{equation}
    \frac{1}{h'(N)}\mu\approx \frac{\mathbf{E}_{f^*}[U]}{n_{\Gamma}},
    \label{appapp}
  \end{equation}
where $\mathbf{E}_{f^*}[U]$ denotes the mean interdomain energy with respect to either $f$ or $f_0=f_1f_2$ or some suitable interpolant between $f$ and $f_0$ that provides an even better (computable) approximation of $\Delta F$. (Note that by Theorem \ref{thm:bogo} and the intermediate value theorem, such an interpolant exists, e.g. $f^*=\lambda^* f_0 + (1-\lambda^*)f$ for some $\lambda^*\in[0,1]$.)
Since $N$ is fixed, Eq.~\ref{appapp} states that the average interdomain energy needed to separate the system in two subsystems is approximately proportional to the chemical potential of the system.

We can say more about the relation between the chemical potential and the average interdomain energy: 
Since the chemical potential can be decomposed into $\mu=\mu_{\rm id}+\mu_{\rm ex}$, where $\mu_{\rm id}$ is the ideal gas part of the chemical potential, that depends only on the density of the system and therefore is the same for the total system and for the separated system, while $\mu_{\rm ex}$  is the excess chemical potential that is directly linked to the interactions in the system, it is the latter that counts here. In other words,  $\mu_{\rm ex}$ is what emerges upon building a separating interface. 

In simulations of liquids or gases, for example, $\mu_{\rm ex}$ can be calculated as the 
average work of adding or removing a particle from the system, i.e.~the energetic response to a perturbation due to insertion or removal of a particle. In this case, the excess chemical potential can be computed using the Widom insertion method, see e.g.~Refs.\citenum{ins1,tuckerman,frenkel}. In our situation, $\mathbf{E}_{f}[U]$ or $\mathbf{E}_{f_1f_2}[U]$ are upper and lower bounds to the cost of adding or removing a surface that divides the system in a two smaller fractions, that is the energetic response to the addition or removal of interdomain interactions and which can be easily computed by taking ergodic averages of the interaction potential.

\subsection{Alchemical transformation: free energy perturbation method}% and its link to $\mathbf{E}_{f}[U]$ and $\mathbf{E}_{f_{1}f_{2}}[U]$}

Another argument in favour of the interpretation of  $\mathbf{E}_{f}[U]$ or $\mathbf{E}_{f_{1}f_{2}}[U]$ as a thermodynamic response can be found by considering Zwanzig's free energy perturbation formula for the free energy difference under an alchemical transformation from $H_0$ and $H=H_1$. Letting $\mu_0$ and $\mu_1$ denote the corresponding Boltzmann distributions, and calling $U_\eps=\eps(H_1-H_0)$, we consider the following homotopy (or: alchemical transformation) between the Hamiltonians $H_0$ and $H_1$:  
\begin{equation}
	H_\eps	= (1-\eps) H_0 + \eps H_1 = H_0 + U_\eps \,,\quad \eps\in[0,1]\,.
	\end{equation}	
The free energy difference $\Delta F(\eps)$ between a system in thermal equilibrium with Hamiltonian $H_\eps$, $\eps\in[0,1]$ and $H_0$ can now be expressed by\cite{zwanz}
\begin{equation}\label{deltaFeps}
\Delta F(\eps) = -\beta^{-1}\log \mathbf{E}_{\mu_0}\!\left[e^{-\beta U_\eps}\right]\,,
\end{equation}
where we use the shorthand $\mathbf{E}_{\mu_\eps}[\cdot]$ to denote the expectation with respect to $\mu_\eps$ for any  $\eps\in[0,1]$, i.e., 
\begin{equation}\label{Emueps}
\mathbf{E}_{\mu_\eps}[g] = \frac{1}{Z_\eps}\int_\Omega g(\bmr) e^{-\beta H_\eps(\bmr)}\,\rd\bmr\,,\quad Z_\eps = \int_\Omega e^{-\beta H_\eps(\bmr)}\mathrm{d}\bmr\,,
\end{equation}
where $g\colon\Omega\to\R$ is any integrable function.  
Note that for $\eps=1$ the free energy difference (\ref{deltaFeps}) agrees with the interface free energy (\ref{deltaF}), if we identify the interaction potential $U$ with $U_1=H_1-H_0$ and bear in mind that $\mu_1$ and $\mu_0$ correspond to the former Boltzmann distributions with densities $f\propto\exp(-\beta H_1)$ and $f_1f_2\propto\exp(-\beta H_0)$. 
Theorem \ref{thm:bogo} now immediately implies that 
\begin{equation}\label{DeltaFbogo}
	\mathbf{E}_{\mu_\eps}[U_\eps] \le \Delta F(\eps) \le \mathbf{E}_{\mu_0}[U_\eps]\,, \quad \eps\in[0,1]\,,
\end{equation}
Moreover, by H\"older's inequality, $\Delta F(\eps)$ is a concave function of $\eps$ (see Lemma \ref{lem1} in Appendix \ref{sec:concavity} for details), which implies the following uniform bound for the free energy difference:

\begin{theorem}\label{thm:fep}
	Let $\Delta F<\infty$ for all $\eps\in[0,1]$. Then 
	\begin{equation}\label{unifDeltaF}
			\min\{\mathbf{E}_{\mu_1}[H_1-H_0],0\} \le \Delta F(\eps) \le \max\{\mathbf{E}_{\mu_0}[H_1-H_0],\,0\}\,, \quad \eps\in[0,1]\,.
	\end{equation}
 Moreover, $\Delta F$ is differentiable in $(0,1)$, with 
 \begin{equation}\label{unifDeltaFprime}
     \mathbf{E}_{\mu_1}[H_1-H_0] \le \Delta F'(\eps) \le \mathbf{E}_{\mu_0}[H_1-H_0]
 \end{equation}
\end{theorem}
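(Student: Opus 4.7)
The plan is to derive both (\ref{unifDeltaFprime}) and (\ref{unifDeltaF}) from two ingredients: a direct formula for $\Delta F'(\eps)$ obtained by differentiating the partition function, and the concavity of $\Delta F$ supplied by Lemma \ref{lem1}. First I would establish differentiability of $\Delta F$ on $(0,1)$ by differentiating $\Delta F(\eps)=-\beta^{-1}\log Z_\eps$ in $\eps$. Because $\partial_\eps H_\eps = H_1-H_0$, a dominated-convergence argument (justified by the hypothesis $\Delta F(\eps)<\infty$, which guarantees $0<Z_\eps<\infty$) permits interchanging the derivative with the integral in (\ref{Emueps}), and in the quantum case it requires a Duhamel-type formula for $\partial_\eps\,\ee^{-\beta H_\eps}$. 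The outcome is the standard thermodynamic identity
\begin{equation*}
\Delta F'(\eps)=\mathbf{E}_{\mu_\eps}[H_1-H_0]\,,\qquad \eps\in(0,1)\,.
\end{equation*}

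Next I would exploit concavity: Lemma \ref{lem1} states that $\Delta F$ is concave on $[0,1]$, which for a differentiable function is equivalent to its derivative being non-increasing. Hence $\eps\mapsto \mathbf{E}_{\mu_\eps}[H_1-H_0]$ is a non-increasing function of $\eps\in(0,1)$, and taking one-sided limits at the endpoints (and identifying them with $\mathbf{E}_{\mu_0}[H_1-H_0]$ and $\mathbf{E}_{\mu_1}[H_1-H_0]$ by continuity of $\eps\mapsto\mu_\eps$) yields the sandwich (\ref{unifDeltaFprime}) immediately.

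For the bound (\ref{unifDeltaF}) on $\Delta F(\eps)$ itself, I would use $\Delta F(0)=0$ together with the fundamental theorem of calculus: $\Delta F(\eps)=\int_0^\eps \Delta F'(t)\,\rd t$. Substituting (\ref{unifDeltaFprime}) gives
\begin{equation*}
\eps\,\mathbf{E}_{\mu_1}[H_1-H_0]\le \Delta F(\eps)\le \eps\,\mathbf{E}_{\mu_0}[H_1-H_0]\,,\qquad \eps\in[0,1]\,.
\end{equation*}
A case split on the sign of each endpoint expectation then converts these $\eps$-dependent bounds into uniform ones: if $\mathbf{E}_{\mu_0}[H_1-H_0]\ge 0$ then $\eps\,\mathbf{E}_{\mu_0}[H_1-H_0]\le \mathbf{E}_{\mu_0}[H_1-H_0]$, otherwise $\eps\,\mathbf{E}_{\mu_0}[H_1-H_0]\le 0$, so in either case the upper bound is $\le\max\{\mathbf{E}_{\mu_0}[H_1-H_0],0\}$; the argument for the lower bound is symmetric.

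The main technical obstacle sits in the very first step: justifying the interchange of differentiation and integration (or trace) and verifying continuity of $\mathbf{E}_{\mu_\eps}[H_1-H_0]$ at the endpoints $\eps\in\{0,1\}$. This requires some integrability or domain condition on $H_1-H_0$, which is implicit in the assumption $\Delta F(\eps)<\infty$, but in the quantum case — where $H_0$ and $H_1$ need not commute — calls for a careful Duhamel-style manipulation to recover the clean expectation-value form for $\Delta F'$. Once that formula is in hand, everything else reduces to a short analytic argument combining concavity with the fundamental theorem of calculus.
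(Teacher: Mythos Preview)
Your proposal is correct and uses the same two ingredients as the paper: the derivative formula $\Delta F'(\eps)=\mathbf{E}_{\mu_\eps}[H_1-H_0]$ (Lemma~\ref{lem2}) and concavity (Lemma~\ref{lem1}), from which monotonicity of the derivative and hence (\ref{unifDeltaFprime}) follow exactly as you describe. The only organizational difference is that for (\ref{unifDeltaF}) the paper reads off the upper bound directly from the Bogoliubov inequality (\ref{DeltaFbogo}), and for the lower bound combines (\ref{DeltaFbogo}) with the monotonicity of $\Delta F'$; you instead obtain both bounds by integrating (\ref{unifDeltaFprime}) via the fundamental theorem of calculus, arriving at the same intermediate inequality $\eps\,\mathbf{E}_{\mu_1}[H_1-H_0]\le\Delta F(\eps)\le\eps\,\mathbf{E}_{\mu_0}[H_1-H_0]$ that the paper records immediately after the theorem. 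Your route is slightly more self-contained in that it does not invoke Theorem~\ref{thm:bogo} separately, while the paper's route makes the link to the Bogoliubov bound more explicit; substantively the two arguments coincide. (Your aside about the quantum Duhamel formula is unnecessary here, since the paper's Lemma~\ref{lem2} treats the classical expectation directly.)
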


\begin{proof}
	The upper bound in (\ref{unifDeltaF}) is a straight consequence of (\ref{DeltaFbogo}). To prove the corresponding lower bound, note that Lemma \ref{lem2} implies that $\Delta F$ is differentiable. Moreover, for a differentiable function, concavity is equivalent to its derivative being decreasing. By Lemma \ref{lem2}, 
	\begin{equation}\label{DeltaFderiv}
		\Delta F'(\eps) = 	\mathbf{E}_{\mu_\eps}\!\left[\frac{\partial U_\eps}{\partial\eps}\right] = 	\mathbf{E}_{\mu_\eps}\!\left[H_1-H_0\right]\,.
	\end{equation}
	The lower bound in (\ref{unifDeltaF}) is now implied by (\ref{DeltaFbogo}) and the fact that $\Delta F'$ is a decreasing function. The inequality  (\ref{unifDeltaFprime}) for the derivative follows analogously. 
\end{proof}

While Theorem \ref{thm:fep} can serve as a first step to estimate free energy differences, more precise estimates may be obtained by combining (\ref{DeltaFbogo}) with subgradient estimates, exploiting the fact that any convex function can be lower bounded by affine functions. Specifically, using that $-\Delta F$ is convex, it follows that 
\begin{equation}
	\Delta F(\eps) \le \Delta F(\eps_0) + \Delta F'(\eps_0)(\eps - \eps_0)\,,\quad \eps,\eps_0\in[0,1]\,,
\end{equation}
which together with	 (\ref{DeltaFderiv}) implies that
\begin{equation}
	\eps 	\mathbf{E}_{\mu_1}(H_1-H_0) \le \Delta F(\eps) \le \eps 	\mathbf{E}_{\mu_0}(H_1-H_0) \,,\quad \eps\in[0,1]\,.
\end{equation}
While the last equation is still only a rough estimate, it can be used to extrapolate or interpolate  free energies profiles in a controlled way in situations in which thermodynamic integration is computationally not feasible. By a similar calculation, one can show that the second derivative, $\Delta F''(\eps)$, exists and is equal to the variance of the interaction potential under the probability distribution $\mu_\eps$, which can be exploited to further refine the bounds for $\Delta F$.

A side aspect of Theorem \ref{thm:fep} is that it shows that the mean interdomain energy is the derivative of the free energy with respect to the perturbation parameter $\eps$, which by definition is a response function with respect to an $\eps$-perturbation of the interaction potential. 

\subsection{An illustrative calculation using Gaussians}\label{ssec:Gaussians}

For the sake of the argument, we assume that $\Omega_1=\R^n$ and $\Omega_2=\R^{m}$, $m=N-n$, are unbounded, and we consider two assemblies of linear oscillators that are weakly coupled through an interface. We let $\eps\in [0,1]$ and define the family of Hamiltonians
\begin{equation}
	H_\eps\colon\R^{n+m}\to\R\,,\;  \bmr \mapsto \frac{1}{2} \bmr^T K_\eps\bmr \,,\quad 	
	K_\eps =\begin{pmatrix}
A & C_\eps \\ C^T_\eps & B
	\end{pmatrix}\in \R^{(n+m)\times (n+m)}
\end{equation} 
where the off-diagonal block 
\begin{equation}
	C_\eps =\begin{pmatrix}
		0 & \ldots & 0 \\ \vdots &  \ddots & \vdots\\ \eps & \ldots & 0 
	\end{pmatrix}\in \R^{n\times m}
\end{equation} 
represents the bilinear coupling between the $n^{\text{th}}$ particle of subsystem 1 and the $1^{\text{st}}$ particle of subsystem 2. We assume that $K_\eps$ is symmetric positive definite (s.p.d.) for any $\eps\in [0,1]$, which requires that both $A$ and $B$ are s.p.d.
Our choice entails pairwise intradomain interactions, such as  
\begin{equation}\label{GaussH0ex}
    H_0(\bmr) = \frac{1}{2}\sum_{i=1}^{n-1} k_i \|\bmr_{i+1} - \bmr_i\|^2 + \frac{1}{2}\sum_{j=n}^{N-1} k_j \|\bmr_{j+1} - \bmr_j\|^2
\end{equation}
for $k_i,k_j\ge 0$. Note that the fact that $H_\eps$ is quadratic and strictly convex for all $\eps\in[0,1]$ implies that $\mu_\eps$ is Gaussian. 
We let $\rho_\eps$ denote the associated Gaussian density, so that, in the notation employed in the previous sections, $\rho_1=f$ and $\rho_0=f_1f_2$ correspond to the Boltzmann distributions of the fully coupled and decoupled systems. Specifically, 
\begin{equation}
	\rho_\eps(\bmr) = \sqrt{\frac{\det K_\eps}{(2\pi)^N}}\exp(-\beta H_\eps(\bmr))
\end{equation}
is the density of a Gaussian with mean zero and covariance $\Sigma_\eps=\beta^{-1}K_\eps^{-1}$. We set $U_\eps= H_\eps - H_0$, which boils down to  
\begin{equation}
	U_\eps %=\frac{1}{2}\bmr^T \begin{pmatrix}	0 & C_\eps \\ C^T_\eps & 0 \end{pmatrix}\bmr % 
	= \eps\bmr_n\bmr_{n+1}
\end{equation}
In what follows, we will focus on two aspects:
\begin{enumerate}
	\item We will explicitly compute upper and lower bounds to the interface free energy $\Delta F$.
	\item We discuss the asymptotic behaviour of $q_{\text{ref}}$ in the limit $n,m\to\infty$. 
\end{enumerate}

To address the first point, we compute the expectation of the interaction potential $U_\eps$ under the Gaussian equilibrium distribution $\mu_\eps$. As is shown in Appendix \ref{sec:matrices}, the upper bound is given by 
\begin{equation}
	\mathbf{E}_{\mu_0} [U] = 0\,,
\end{equation}
since $\bmr_n$ and $\bmr_{n+1}$ are uncorrelated, whereas the lower bound is strictly negative and thus expresses an anticorrelation between $\bmr_n$ and $\bmr_{n+1}$: 
\begin{equation}
	\mathbf{E}_{\mu_\eps} [U] = -\eps^2a^\sharp_{nn}b^\sharp_{11} + \mathcal{O}(\eps^4)\,.
\end{equation}
Here $a^\sharp_{ij}$ and $b^\sharp_{ij}$ denote the entries of the inverse matrices $A^{-1}$ and $B^{-1}$. 
As a consequence, assuming $\eps$ sufficiently small and ignoring the $\mathcal{O}(\eps^4)$ term, we have 
\begin{equation}\label{DeltaFGaussbound}
	-\eps^2a^\sharp_{nn}b^\sharp_{11} \lesssim \Delta F(\eps) \le 0\,.
\end{equation}

This easily computable and physically intuitive result should be compared with the exact interface free energy (see Appendix \ref{sec:matrices}): 
\begin{equation}\label{DeltaFGauss}
		\Delta F(\eps) = (2\beta)^{-1}\log\frac{\det B_S}{\det B}\,,
\end{equation}
with the Schur complement of $B$, 
\begin{equation}
	B_S = B - C_\eps^TA^{-1}C_\eps =  \begin{pmatrix}
		b_{11} - \eps^2 a^\sharp_{nn} & \ldots & b_{1m} \\ \vdots &  \ddots & \vdots\\ b_{m1} & \ldots & b_{mm} 
	\end{pmatrix},
\end{equation}
being a rank-1 perturbation of $B$ of order $\eps^2$. 

\begin{figure}
    \centering
    \includegraphics[width=0.5\textwidth]{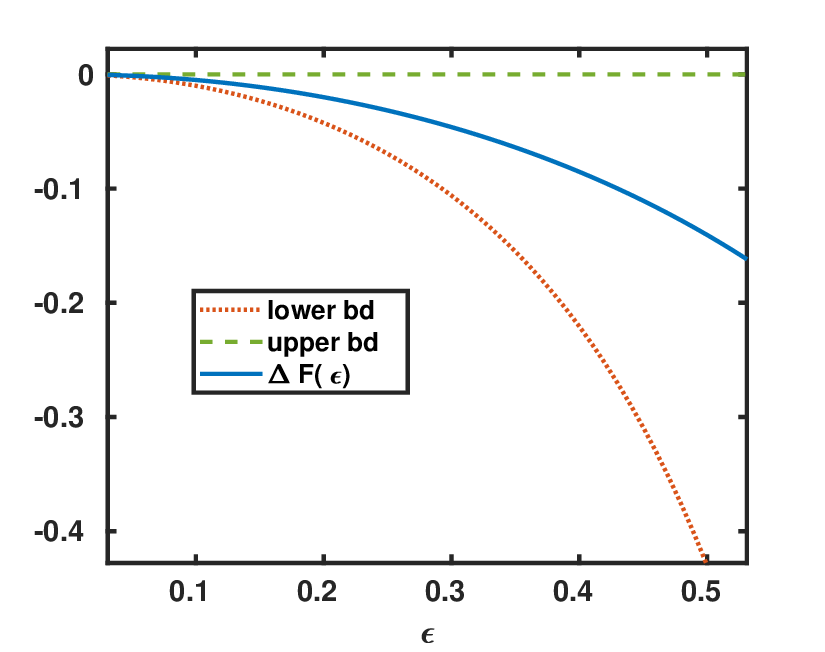}
    \caption{Interface free energy and its upper and lower bounds for two weakly coupled linear oscillator chains, each of which consists of $n=m=100$ particles. Here $\eps$ is the coupling parameter that models the strength of the bilinear interactions between the 100th particle and the 101st particle.}
    \label{fig}
\end{figure}

Figure \ref{fig} shows $\Delta F(\eps)$ for $\beta=1$ together with its upper and lower bounds for two homogeneous linear oscillator chains of $n=m=100$ particles with nearest neighbor interactions, such that $H_0$ is of the form  (\ref{GaussH0ex}) with $k_i=k_j=1$. In this case,   
\begin{equation}
	A = B = \begin{pmatrix}
		2 & -1 & 0 & \ldots & 0 \\ -1 &  2  & -1 & \ldots & 0\\ 0 & \ddots & \ddots & \ddots & 0\\ \vdots & \vdots & \ddots & \ddots  & -1 \\ 0 & \ldots & 0 & -1 & 2  
	\end{pmatrix}\in\R^{n\times n}\,,
\end{equation}
is the negative discrete Laplacians equipped with Dirichlet boundary conditions. In contrast to the upper and lower bounds for $\Delta F$ that can be shown to be a sufficiently accurate approximation in many cases, expressions like (\ref{DeltaFGauss}) are in general not easy to evaluate exactly (even though there is a simple formula here); using asymptotic approximations of $\Delta F(\eps)$ for small $\eps$ is often no better than using the available upper and lower bounds that are asymptotically sharp as $\eps\to 0$.

%\subsection{A note on the relative error} 

Before we conclude this section, a final remark is in order. In Section \ref{sec:relError} we have discussed a criterion to check whether finite-size effects play a role in a molecular simulation. Considering our Gaussian example for simplicity, we will argue that while it is crucial to look at the relative error $q_{\max}$ defined in (\ref{relcrit}) rather than just the smallness of the interface free energy, the relative error is relatively robust with respect to the choice of the reference energy scale $E_{\rm ref}$.

As a reference energy, we define the free energy of the decoupled system, $E_{\rm ref}=-\beta^{-1}\log Z_0$. (Note that $-\beta^{-1}\log Z_0 > -\beta^{-1}\log Z_\eps$, since $\Delta F(\eps)<0$ for all $\eps\in(0,1]$, but since we assume $\eps$ to be sufficiently small, the absolute difference is negligible.) Whether or not $q_{\max}$ is small depends on the the interactions within the two compartments. For example, if in the linear oscillator chain case, with $A=B$ being negative discrete Laplacians, Theorem 1.1 of Ref.~\citenum{izyurov2022asymptotics} implies that 
\begin{equation}
	\lim_{N\to\infty} \frac{1}{N}\log \det K_0 = c
\end{equation}
for some constant $c>0$ where $N=2n$. As a consequence
\begin{equation}
	q_{\max} = \frac{|\mathbf{E}_{\mu_\eps}[U]|}{E_{\rm ref}} = \mathcal{O}\left(\frac{\eps^2}{N}\right)\,,
\end{equation}
which gives a rational criterion to check whether a simulation is large enough to ignore finite-size effects (assuming that the corresponding prefactors can be estimated). 

We emphasize that the denominator in $q_{\max}$ may display a similar behaviour even if the interactions inside the compartments are weak (or absent); the reason is that $E_{\rm ref}$ is an extensive quantity. As an extreme  example, we consider the case $A=B=I_{n\times n}$, which gives 
\begin{equation}
	E_{\rm ref} = -\beta^{-1}\log Z_0 = (2\beta)^{-1}\log \frac{\det A\det B}{(2\pi)^N} = \mathcal{O}(N) \,.
\end{equation}
The same scaling behaviour is achieved when the average potential energy instead of the free energy is considered (which amounts to computing the trace of the covariance matrix rather than the logarithm of its determinant), and it is merely a matter of taste or computational convenience which reference energy scale is chosen; the bottomline is that the relative error criterion is expected to be relatively robust with respect to the choice of the reference energy. Nevertheless, when it comes to tuning the system size, the prefactor in the $\mathcal{O}(N)$-term becomes important, because it determines how large $N$ or $n$ must be chosen to achieve the desired thermodynamic accuracy exactly.

\section{Conclusions}\label{sec:fin}

%We have derived a scaling law for the relative error of certain thermodynamic bulk properties induced by finite-size effects in a dense quantum gas. The approach was based on a two-sided Bogoliubov inequality for the interface free energy that measures the free energy difference when an interaction potential for the particles in two otherwise non-interacting subsystems is added. In previous works, the relative error criterion has been applied to the simulation of thermodynamic bulk properties of many-particle systems, including an interacting uniform electron gas, as often used in Quantum Monte Carlo simulations of cold quantum gases. In order to understand how the relative error is related to thermodynamic bulk properties, we have 

%The criterion is useful for deternming a reliable order of magnitude of finite size effects that can be certainly useful to make decision about the minimal size of, e.g., a simulation box (number of particles), as explicitly reported in the paper.  In addition we justify the criterion in terms of free energy response to a perturbation and show this connection from two different points of view.

 We have derived a scaling law for the relative error of certain thermodynamic bulk properties induced by finite-size effects in a dense quantum gas. In previous works, the relative error criterion was applied to the simulation of thermodynamic bulk properties of many-particle systems, including
an interacting uniform electron gas, as often used in Quantum Monte Carlo simulations of cold quantum gases. 

The approach is based on a two-sided Bogoliubov inequality for the interface free energy that measures the free energy difference when an interaction potential for the particles in two otherwise non-interacting subsystems is added. We stress that there is a trade-off between an accurate estimate of the interface free energy in terms of its upper and lower bounds (that may be computationally costly to evaluate) and a scaling law that can be applied without doing any (or very little) numerical computations, but that gives only the correct order of magnitude of the required system size; when it comes to optimizing the size of the simulation box (i.e. the number of particles), the prefactors clearly matter. 

Finally, in order to understand how the relative error is related to thermodynamic bulk properties, we have discussed the interpretation of the free energy error bounds as approximate thermodynamic response functions. We admit that this part is merely conceptual, and therefore future work ought to address, how finite-size effects affect the relevant thermodynamic response functions, such as isothermic or isobaric compressibilities. 

\acknowledgments{This work was partly supported by the DFG Collaborative Research Center 1114 ``Scaling Cascades in Complex Systems'', project No.235221301, Projects A05 (C.H.) ``Probing scales in equilibrated systems by optimal nonequilibrium forcing'' and C01 (L.D.S.) ``Adaptive coupling of scales in molecular dynamics and beyond to fluid dynamics''. 
One author (C.H.) acknowledges support by the German Federal Government, the Federal Ministry of Education and Research and the State of Brandenburg within the framework of the joint project ``EIZ: Energy Innovation Center'' (project numbers 85056897 and 03SF0693A) with funds from the Structural Development Act (Strukturstärkungsgesetz) for coal-mining regions.}

\appendix

\section{Concavity and differentiability of the interface free energy}\label{sec:concavity}

In this paragraph, we collect a few properties of the free energy difference under an alchemical transformation between the Hamiltonians $H_0$ and $H_1$,
\begin{equation}
\Delta F(\eps) = -\beta^{-1}\log \mathbf{E}_{\mu_0}\!\left[e^{-\beta U_\eps}\right]\,,\quad \eps\in[0,1]\,,
\end{equation}
where $U_\eps = \eps(H_1 - H_0)$ and we use the shorthand $\mathbf{E}_{\mu_0}[\cdot]$ to denote the expectation with respect to the measure $\mu_0$ with density
$\rho_0\propto\exp(-\beta H_0)$; since $\mu_0$ is independent of $\eps$, we will drop the subscript $\mu_0$ in the following. 

The properties of the free energy difference are standard (see, e.g., Sec.~2 in Ref.~\citenum{dembo2009large}) and essentially follow from the properties of its close relative, the cumulant generating function of a random variable (also called: logarithmic moment generating functions). For the ease of notation, we switch from free energies---that involve parameters $\beta$ and the like---to cumulant generating functions.
\begin{definition}[Cumulant generating function]
    Letting $W$ be any real valued random variable (i.e.~observable) on $\Omega$, its cumulant generating function (CGF) is defined as
\begin{equation}
    \gamma\colon \R\to(-\infty,\infty]\,,\; s \mapsto \log \mathbf{E}\!\left[e^{s W}\right]=:\gamma(s)\,,
\end{equation}
where the value $\gamma(s)=\infty$ is allowed. 
\end{definition}
If we set $W=-\beta (H_1-H_0)$, then the free energy is recovered by $\Delta F(\eps) = -\beta^{-1}\gamma(\eps)$, $\eps\in[0,1]$. In particular, $\Delta F$ is concave if and only if $-\Delta F$ is convex which is exactly the case if $\gamma$ is convex on the domain $[0,1]$. 

\begin{lemma}\label{lem1}
    The function $\gamma$ is convex. 
\end{lemma}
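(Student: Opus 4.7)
The plan is to prove convexity of $\gamma$ directly from its definition by invoking Hölder's inequality on the underlying expectation, which is the canonical route for cumulant generating functions (cf.~Sec.~2 of Ref.~\citenum{dembo2009large}, as hinted by the authors).

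First I would fix $s_1,s_2\in\R$ and $\lambda\in[0,1]$ and aim to establish
\begin{equation*}
    \gamma\bigl(\lambda s_1 + (1-\lambda)s_2\bigr) \le \lambda\,\gamma(s_1) + (1-\lambda)\,\gamma(s_2)\,.
\end{equation*}
The key rewriting is
\begin{equation*}
    e^{(\lambda s_1 + (1-\lambda) s_2) W} = \bigl(e^{s_1 W}\bigr)^{\lambda}\bigl(e^{s_2 W}\bigr)^{1-\lambda}\,,
\end{equation*}
at which point Hölder's inequality with conjugate exponents $p=1/\lambda$ and $q=1/(1-\lambda)$ (interpreting the boundary cases $\lambda\in\{0,1\}$ trivially) yields
\begin{equation*}
    \mathbf{E}\!\left[\bigl(e^{s_1 W}\bigr)^{\lambda}\bigl(e^{s_2 W}\bigr)^{1-\lambda}\right]
    \le \mathbf{E}\!\left[e^{s_1 W}\right]^{\lambda}\mathbf{E}\!\left[e^{s_2 W}\right]^{1-\lambda}\,.
\end{equation*}
Taking logarithms (which is monotone) turns the product on the right into the desired convex combination $\lambda\gamma(s_1)+(1-\lambda)\gamma(s_2)$ on the left, proving the claimed inequality.

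The only mild subtlety is that $\gamma$ is allowed to take the value $+\infty$; however, the inequality remains valid since if either $\gamma(s_1)$ or $\gamma(s_2)$ is infinite, then the right-hand side is $+\infty$ and the bound is vacuous, while if both are finite, then the product on the right of Hölder's inequality is finite and the argument goes through verbatim. No delicate estimate is needed because Hölder's inequality holds on any measure space regardless of finiteness of the individual integrals. There is no real obstacle here: the entire proof is one application of Hölder, and the convexity of $\gamma$ on $\R$ (hence on $[0,1]$) is an immediate consequence. Concavity of $\Delta F(\eps)=-\beta^{-1}\gamma(\eps)$ used in the proof of Theorem~\ref{thm:fep} then follows at once.
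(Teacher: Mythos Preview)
Your proof is correct and follows essentially the same route as the paper's own argument: both factor $e^{(\lambda s_1+(1-\lambda)s_2)W}=(e^{s_1W})^\lambda(e^{s_2W})^{1-\lambda}$, apply H\"older's inequality with conjugate exponents $p=1/\lambda$ and $q=1/(1-\lambda)$, and then take logarithms. Your additional remark on the $+\infty$ case is a welcome clarification but does not change the approach.
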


\begin{proof}
    We remind the reader of Hölder's inequality: For any two random variables $X,Y$ on $\Omega$ and $q,p\in [1,\infty]$ with $\frac{1}{p} + \frac{1}{q}=1$, it holds that $\mathbf{E}[|XY|]\le (\mathbf{E}[|X|^p])^{\frac{1}{p}}(\mathbf{E}[|Y|^q])^{\frac{1}{q}}$. 

    Now letting $s,t\in\R$ and $\lambda\in[0,1]$, then Hölders inequality for $p=\frac{1}{\lambda}$ and $q=\frac{1}{1-\lambda}$ with the convention $\frac{1}{0}=\infty$ and $\frac{1}{\infty}=0$ implies that 
        \begin{align*}
        \gamma(\lambda s + (1-\lambda)t) & = \log\mathbf{E}\!\left[e^{\lambda s W}e^{(1-\lambda)t W}\right]\\
        & = \log\left(\mathbf{E}\!\left[\left(e^{s W}\right)^\lambda\left(e^{t W}\right)^{1-\lambda}\right]\right)\\
        & \le \log\left(\left(\mathbf{E}\!\left[e^{s W}\right]\right)^\lambda\left(\mathbf{E}\!\left[e^{t W}\right]\right)^{1-\lambda}\right)\\
        & = \lambda \gamma(s) + (1-\lambda)\gamma(t)\,.
        \end{align*}
    Hence $\gamma$ is convex. 
\end{proof}

As a consequence of Lemma \ref{lem1}, the interface free energy $\Delta F(\eps)$ is a concave function of $\eps$. The next lemma addresses the differentiability of CGF and free energy. 

\begin{lemma}\label{lem2}
    Let $D:=\{s\in\R\colon \gamma(s)<\infty\}$. Then $\gamma$ is differentiable in the interior of $D$, with 
    \begin{equation}
        \gamma'(s) = \frac{\mathbf{E}\!\left[We^{sW}\right]}{\mathbf{E}\!\left[e^{sW}\right]} = \mathbf{E}\!\left[We^{sW - \gamma(s)}\right] =: \mathbf{E}_{\mu_s}[W]\,,\quad s\in D^\circ
    \end{equation}
\end{lemma}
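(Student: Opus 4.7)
The plan is to reduce differentiability of $\gamma$ to that of the moment generating function $M(s) := \mathbf{E}[e^{sW}]$ by writing $\gamma = \log M$ and applying the chain rule. Differentiability of $M$ on $D^\circ$, together with the formula $M'(s) = \mathbf{E}[We^{sW}]$, will be established by differentiating under the expectation via dominated convergence; the two expressions for $\gamma'$ in the statement differ only by absorbing $e^{-\gamma(s)} = 1/M(s)$ into the integrand, which is exactly the density of the exponentially tilted measure $\mu_s$ with respect to the reference measure.

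Fix $s \in D^\circ$ and choose $\eps > 0$ so that $[s-\eps, s+\eps] \subset D$. The core technical step is to produce, uniformly for $t$ in a neighbourhood of $s$, an integrable majorant for $|W|e^{tW}$; this is exactly what the mean value theorem applied to $t \mapsto e^{tW}$ requires in order to control the difference quotient. The elementary inequality I would invoke is
\begin{equation*}
    |W|\,e^{tW} \le \frac{2}{\eps e}\bigl(e^{(t+\eps/2)W} + e^{(t-\eps/2)W}\bigr),
\end{equation*}
obtained by splitting on the sign of $W$ and using $\sup_{x \ge 0} x\,e^{-\eps x/2} = 2/(\eps e)$. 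For $|t-s| \le \eps/2$, both shifted exponents lie in $[s-\eps, s+\eps]$, and monotonicity of the exponential in its argument gives the uniform integrable majorant $2\bigl(e^{(s+\eps)W} + e^{(s-\eps)W}\bigr)$, whose integrability is guaranteed by the choice of $\eps$.

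With this majorant in hand, I would write, for $|h| < \eps/4$, the pointwise identity $(e^{(s+h)W} - e^{sW})/h = W e^{\xi W}$ for some $\xi = \xi(h,W)$ between $s$ and $s+h$ by the mean value theorem, and pass to the limit $h \to 0$ under the expectation using dominated convergence. This yields $M'(s) = \mathbf{E}[W e^{sW}]$. Since $M(s) = e^{\gamma(s)} > 0$, the chain rule delivers $\gamma'(s) = M'(s)/M(s)$, which is the first form in the statement, and pulling $e^{-\gamma(s)}$ inside the expectation produces the second form $\mathbf{E}_{\mu_s}[W]$.

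The only real obstacle is the integrability estimate: once the uniform integrable majorant is produced, the chain rule and DCT finish the job mechanically. Convexity of $\gamma$ (Lemma~\ref{lem1}) and the fact that $D$ is an interval play no essential role here beyond guaranteeing existence of the $\eps$-neighbourhood around $s$; in particular, the argument works pointwise in $D^\circ$ without any global structural input.
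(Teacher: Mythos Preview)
Your proof is correct and follows essentially the same strategy as the paper's: reduce to differentiability of the moment generating function, produce an integrable majorant by comparison with $e^{(s\pm\delta)W}$, and invoke dominated convergence. The only difference is cosmetic---you bound the difference quotient via the mean value theorem as $W e^{\xi W}$, whereas the paper expands the remainder $(e^{(t-s)W}-1-(t-s)W)/(t-s)$ as a power series; both routes feed into the same integrable envelope and the same limit.
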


\begin{proof}
It suffices to show that the moment generating function $\varphi(s)=\mathbf{E}[e^{sW}]$ is differentiable in $D^\circ$. 
Clearly $0\in D$, so $D \neq \emptyset$. We further assume that its interior, $D^\circ$, in non-empty. It is easy to see that $D$ must be an interval, since $\delta\in D$ for some $\delta>0$ implies that $[0,\delta]\in D$. To see the latter, notice that for any $s\in[0,\delta]$, the following inequality holds: 
\[
    \mathbf{E}[e^{s W}] =  \mathbf{E}[e^{s W}\mathbf{1}_{\{W \ge 0\}}]   + \mathbf{E}[e^{s W}\mathbf{1}_{\{W<0\}}]  \le \mathbf{E}[e^{\delta W}] + 1\,.
\]
By the same argument, $-\delta\in D$ for some $\delta>0$ implies that $[-\delta,0]\in D$. 

Another property of the moment-generating function that is finite on an interval concerns the existence of moments; here we are only interested in the mean: If $s\in D^\circ$ is any interior point, then   
\[
    \mathbf{E}\!\left[|W|e^{sW}\right] < \infty \,.
\]
The assertion follows from choosing $\delta>0$ sufficiently small such that $(s-\delta,s+\delta)\subset D$, because 
\begin{equation}\label{aux1}
|W|e^{sW} \le e^{\delta|W|}e^{sW} \le e^{(s-\delta)W} + e^{(s+\delta)W}\,,
\end{equation}
where the right hand is integrable.  

Now, to prove that $\varphi$ is differentiable in $D^\circ$, we consider $s,t\in D^\circ$ and note that 
\begin{equation}\label{aux2}
\frac{\varphi(t) - \varphi(s)}{t-s} - \mathbf{E}\!\left[We^{sW}\right] = \mathbf{E}\!\left[\frac{e^{tW} - e^{sW}}{t-s}  - We^{sW}\right] = \mathbf{E}\!\left[e^{sW}\left(\frac{e^{(t-s)W} - 1 - (t-s)W}{t-s}\right)\right]    
\end{equation}
If $|t-s|<\delta$, then 
\begin{align*}
\left|e^{sW}\left(\frac{e^{(t-s)W} - 1 - (t-s)W}{t-s}\right)\right| & = \left|e^{sW}\sum_{k\ge 2} \frac{(t-s)^{k-1} W^k}{k!}\right|\\
& \le |W|e^{sW}\sum_{k\ge 2} \frac{|\delta W|^{k-1}}{(k-1)!}\\
& \le |W|\left(e^{(s-\delta)W} + e^{(s+\delta)W}\right)\,.
\end{align*}
By iterating (\ref{aux1}), bearing in mind that $s\pm\delta\in D^\circ$, the last expression is integrable, so Lebesgue's Dominated Convergence Theorem applied to (\ref{aux2}) yields the desired result: 
\[
\lim_{t\to s} \left|\frac{\varphi(t) - \varphi(s)}{t-s} - \mathbf{E}\!\left[We^{sW}\right] \right|= 0
\] 
As a consequence, $\gamma'(s)=\frac{d}{ds}\log\varphi(s)=\mathbf{E}\!\left[We^{sW - \gamma(s)}\right]$.
\end{proof}

Lemma \ref{lem2} can be readily translated to the free energy framework: If we set $W=-\beta (H_1-H_0)$, then $\Delta F(\eps) = -\beta^{-1}\log\varphi(\eps)$. Assuming that $\Delta F(\eps)$ is finite for all $\eps\in[0,1]$, it follows that
\begin{equation}
    \Delta F'(\eps) = \mathbf{E}_{\mu_\eps}[H_1-H_0]\,.
\end{equation}

\section{Matrix calculations}\label{sec:matrices}

We explain the calculations that lead to lead to the result (\ref{DeltaFGaussbound}) in the linear oscillator example of Section \ref{ssec:Gaussians}. To this end recall, that $\mu_\eps=\mathcal{N}(0,\Sigma_\eps)$ with $\Sigma_\eps=\beta^{-1}K_\eps^{-1}$. To compute the inverse of the stiffness matrix $K_\eps$, we define the Schur complements of the two matrices $A$ and $B$: 
\begin{equation}
	A_S = A - C_\eps B^{-1} C_\eps^T\,,\quad 	B_S = B - C_\eps^T A^{-1} C_\eps\,,\quad 
\end{equation}
By the properties of Schur complements, both $A_S$ and $B_S$ are s.p.d., hence invertible, which allows us to express the covariance matrix in the form 
\begin{equation}
\Sigma_\eps = \beta^{-1}\begin{pmatrix}
	A_S^{-1} & - A^{-1}C_\eps B_S^{-1}\\ - B_S^{-1}C_\eps^T A^{-1} & B_S^{-1}\,.
\end{pmatrix}
\end{equation}
Since $C_\eps$ is a rank-1 matrix, so is the off diagonal block. To compute it, we denote by $a_{ij}$ and $b_{ij}$ the entries of the matrices $A$ and $B$, and by $a^\sharp_{ij}$ and $b^\sharp_{ij}$ the corresponding entries of $A^{-1}$ and $B^{-1}$. Since
\begin{equation}
	C_\eps^T A^{-1}C_\eps = \begin{pmatrix}
		\eps^2 a^\sharp_{nn} & \ldots & 0 \\ \vdots &  \ddots & \vdots\\ 0 & \ldots & 0 
	\end{pmatrix}\,,\quad  C_\eps B^{-1}C_\eps^T = \begin{pmatrix}
	0 & \ldots & 0 \\ \vdots &  \ddots & \vdots\\ 0 & \ldots & \eps^2 b^\sharp_{11} 
	\end{pmatrix}\,,
\end{equation}
we can recast the Schur complements of $A$ and $B$ as
\begin{equation}
	A_S = A - uu^T\,,\quad u = \left(0,\ldots,0,\eps\sqrt{b^\sharp_{11}}\right)^T
\end{equation}
and 
\begin{equation}\label{BS}
	B_S = B - vv^T\,,\quad v = \left(\eps\sqrt{a^\sharp_{nn}},0,\ldots,0\right)^T\,,
\end{equation}
where the real square roots of $a^\sharp_{nn}$ and $b^\sharp_{11}$ are well-defined since both $A^{-1}$ and $B^{-1}$ are s.p.d.~which implies that their diagonal entries are positive.

To compute the lower bound of the interface free energy, we need to compute the covariance between $\bmr_n$ and $\bmr_{n+1}$ that depends on the off-diagonal entry of $\Sigma_\eps$. The inverse of the Schur complements can be computed using the Woodbury identity for rank-1 perturbations of invertible matrices (a.k.a.~Sherman-Morrison formula, see p.~50 in Ref.\citenum{Golub1996}), which yields 
\begin{equation}
	A_S^{-1} = A^{-1} - \frac{A^{-1}uu^TA^{-1}}{1+ u^T A^{-1}u}\,,\quad 	B_S^{-1} = B^{-1} - \frac{B^{-1}vv^TB^{-1}}{1+ v^T B^{-1}v}\,.
\end{equation}
Putting everything together and ignoring terms of order 4 in $\eps$, we obtain the free energy bound (\ref{DeltaFGaussbound}): 
\begin{equation}
	-\eps^2a^\sharp_{nn}b^\sharp_{11} \lesssim \Delta F(\eps) \le 0\,.
\end{equation}

In order to compute the exact interface free energy, $\Delta F(\eps)$, we have to compute the determinant of the block matrix $K_\eps$ that we factorize as follows:
\begin{equation}
    \begin{pmatrix}
A & C_\eps \\ C^T_\eps & B
	\end{pmatrix} = \begin{pmatrix}
I & 0 \\ C^T_\eps A^{-1} & I
	\end{pmatrix} \begin{pmatrix}
A & 0 \\ 0 & B - C_\eps^T A^{-1} C_\eps
	\end{pmatrix}\begin{pmatrix}
I & A^{-1}C_\eps \\ 0 & I
	\end{pmatrix}
\end{equation}
with $B - C_\eps^T A^{-1} C_\eps$ being equal to the Schur complement $B_S$. Since the two outer triangular matrices in the matrix product on the right hand side have unit determinant, it follows that  
\begin{equation}
	\det K_\eps = \det A\,\det B_S\,.
\end{equation}
Specifically, for $\eps=0$, we have
\begin{equation}
	\det K_0 =  \det A\,\det B\,,
\end{equation}
which yields (\ref{DeltaFGauss}):  
\begin{equation}
	\begin{aligned}
		\Delta F(\eps) & = -\beta^{-1}\log\frac{Z_\eps}{Z_0}\\
				 & = -\beta^{-1}\log\sqrt{\frac{\det K_0}{\det K_\eps}}\\
				 & = (2\beta)^{-1}\log\frac{\det B_S}{\det B}\,,
	\end{aligned}
\end{equation}
where the Schur complement of $B$ can be easily computed using (\ref{BS}):  
\begin{equation}
	B_S =  \begin{pmatrix}
		b_{11} - \eps^2 a^\sharp_{nn} & \ldots & b_{1m} \\ \vdots &  \ddots & \vdots\\ b_{m1} & \ldots & b_{mm}
	\end{pmatrix}\,.
\end{equation}

\bibliography{literature}

%apsrev4-2.bst 2019-01-14 (MD) hand-edited version of apsrev4-1.bst
%Control: key (0)
%Control: author (8) initials jnrlst
%Control: editor formatted (1) identically to author
%Control: production of article title (0) allowed
%Control: page (0) single
%Control: year (1) truncated
%Control: production of eprint (0) enabled
\begin{thebibliography}{22}%
\makeatletter
\providecommand \@ifxundefined [1]{%
 \@ifx{#1\undefined}
}%
\providecommand \@ifnum [1]{%
 \ifnum #1\expandafter \@firstoftwo
 \else \expandafter \@secondoftwo
 \fi
}%
\providecommand \@ifx [1]{%
 \ifx #1\expandafter \@firstoftwo
 \else \expandafter \@secondoftwo
 \fi
}%
\providecommand \natexlab [1]{#1}%
\providecommand \enquote  [1]{``#1''}%
\providecommand \bibnamefont  [1]{#1}%
\providecommand \bibfnamefont [1]{#1}%
\providecommand \citenamefont [1]{#1}%
\providecommand \href@noop [0]{\@secondoftwo}%
\providecommand \href [0]{\begingroup \@sanitize@url \@href}%
\providecommand \@href[1]{\@@startlink{#1}\@@href}%
\providecommand \@@href[1]{\endgroup#1\@@endlink}%
\providecommand \@sanitize@url [0]{\catcode `\\12\catcode `\$12\catcode
  `\&12\catcode `\#12\catcode `\^12\catcode `\_12\catcode `\%12\relax}%
\providecommand \@@startlink[1]{}%
\providecommand \@@endlink[0]{}%
\providecommand \url  [0]{\begingroup\@sanitize@url \@url }%
\providecommand \@url [1]{\endgroup\@href {#1}{\urlprefix }}%
\providecommand \urlprefix  [0]{URL }%
\providecommand \Eprint [0]{\href }%
\providecommand \doibase [0]{https://doi.org/}%
\providecommand \selectlanguage [0]{\@gobble}%
\providecommand \bibinfo  [0]{\@secondoftwo}%
\providecommand \bibfield  [0]{\@secondoftwo}%
\providecommand \translation [1]{[#1]}%
\providecommand \BibitemOpen [0]{}%
\providecommand \bibitemStop [0]{}%
\providecommand \bibitemNoStop [0]{.\EOS\space}%
\providecommand \EOS [0]{\spacefactor3000\relax}%
\providecommand \BibitemShut  [1]{\csname bibitem#1\endcsname}%
\let\auto@bib@innerbib\@empty
%</preamble>
\bibitem [{\citenamefont {Bloch}\ \emph {et~al.}(2012)\citenamefont {Bloch},
  \citenamefont {Dalibard},\ and\ \citenamefont {Nascimbene}}]{cold1}%
  \BibitemOpen
  \bibfield  {author} {\bibinfo {author} {\bibfnamefont {I.}~\bibnamefont
  {Bloch}}, \bibinfo {author} {\bibfnamefont {J.}~\bibnamefont {Dalibard}},\
  and\ \bibinfo {author} {\bibfnamefont {S.}~\bibnamefont {Nascimbene}},\
  }\bibfield  {title} {\bibinfo {title} {Quantum simulations with ultracold
  quantum gases},\ }\href@noop {} {\bibfield  {journal} {\bibinfo  {journal}
  {Nature Physics}\ }\textbf {\bibinfo {volume} {8}},\ \bibinfo {pages} {267}
  (\bibinfo {year} {2012})}\BibitemShut {NoStop}%
\bibitem [{\citenamefont {Giorgini}\ \emph {et~al.}(2008)\citenamefont
  {Giorgini}, \citenamefont {Pitaevskii},\ and\ \citenamefont
  {Stringari}}]{revmodphys}%
  \BibitemOpen
  \bibfield  {author} {\bibinfo {author} {\bibfnamefont {S.}~\bibnamefont
  {Giorgini}}, \bibinfo {author} {\bibfnamefont {L.~P.}\ \bibnamefont
  {Pitaevskii}},\ and\ \bibinfo {author} {\bibfnamefont {S.}~\bibnamefont
  {Stringari}},\ }\bibfield  {title} {\bibinfo {title} {Theory of ultracold
  atomic {Fermi} gases},\ }\href@noop {} {\bibfield  {journal} {\bibinfo
  {journal} {Rev. Mod. Phys.}\ }\textbf {\bibinfo {volume} {80}},\ \bibinfo
  {pages} {1215} (\bibinfo {year} {2008})}\BibitemShut {NoStop}%
\bibitem [{\citenamefont {Pollet}(2012)}]{qmc}%
  \BibitemOpen
  \bibfield  {author} {\bibinfo {author} {\bibfnamefont {L.}~\bibnamefont
  {Pollet}},\ }\bibfield  {title} {\bibinfo {title} {Recent developments in
  quantum {Monte} {Carlo} simulations with applications for cold gases},\
  }\href@noop {} {\bibfield  {journal} {\bibinfo  {journal} {Rep. Prog. Phys.}\
  }\textbf {\bibinfo {volume} {75}},\ \bibinfo {pages} {094501} (\bibinfo
  {year} {2012})}\BibitemShut {NoStop}%
\bibitem [{\citenamefont {Snider}\ \emph {et~al.}(1996)\citenamefont {Snider},
  \citenamefont {Wei},\ and\ \citenamefont {Muga}}]{dense1}%
  \BibitemOpen
  \bibfield  {author} {\bibinfo {author} {\bibfnamefont {R.~F.}\ \bibnamefont
  {Snider}}, \bibinfo {author} {\bibfnamefont {G.~W.}\ \bibnamefont {Wei}},\
  and\ \bibinfo {author} {\bibfnamefont {J.~C.}\ \bibnamefont {Muga}},\
  }\bibfield  {title} {\bibinfo {title} {Moderately dense gas quantum kinetic
  theory: Aspects of pair correlations},\ }\href@noop {} {\bibfield  {journal}
  {\bibinfo  {journal} {J. Chem. Phys.}\ }\textbf {\bibinfo {volume} {105}},\
  \bibinfo {pages} {3057} (\bibinfo {year} {1996})}\BibitemShut {NoStop}%
\bibitem [{\citenamefont {Ni}\ \emph {et~al.}(2008)\citenamefont {Ni},
  \citenamefont {Ospelkaus}, \citenamefont {De~Miranda}, \citenamefont {Peer},
  \citenamefont {Neyenhuis}, \citenamefont {Kotochigova}, \citenamefont
  {Julienne}, \citenamefont {Jin},\ and\ \citenamefont {Ye}}]{dense2}%
  \BibitemOpen
  \bibfield  {author} {\bibinfo {author} {\bibfnamefont {K.~K.}\ \bibnamefont
  {Ni}}, \bibinfo {author} {\bibfnamefont {S.}~\bibnamefont {Ospelkaus}},
  \bibinfo {author} {\bibfnamefont {M.~H.~G.}\ \bibnamefont {De~Miranda}},
  \bibinfo {author} {\bibfnamefont {A.}~\bibnamefont {Peer}}, \bibinfo {author}
  {\bibfnamefont {B.}~\bibnamefont {Neyenhuis}}, \bibinfo {author}
  {\bibfnamefont {S.}~\bibnamefont {Kotochigova}}, \bibinfo {author}
  {\bibfnamefont {P.}~\bibnamefont {Julienne}}, \bibinfo {author}
  {\bibfnamefont {D.~S.}\ \bibnamefont {Jin}},\ and\ \bibinfo {author}
  {\bibfnamefont {J.}~\bibnamefont {Ye}},\ }\bibfield  {title} {\bibinfo
  {title} {A high phase-space-density gas of polar molecules},\ }\href@noop {}
  {\bibfield  {journal} {\bibinfo  {journal} {Science}\ }\textbf {\bibinfo
  {volume} {322}},\ \bibinfo {pages} {231} (\bibinfo {year}
  {2008})}\BibitemShut {NoStop}%
\bibitem [{\citenamefont {Ospelkaus}\ \emph {et~al.}(2008)\citenamefont
  {Ospelkaus}, \citenamefont {Pe’Er}, \citenamefont {Ni}, \citenamefont
  {Zirbel}, \citenamefont {Neyenhuis}, \citenamefont {Kotochigova},
  \citenamefont {Julienne}, \citenamefont {Ye},\ and\ \citenamefont
  {Jin}}]{dense3}%
  \BibitemOpen
  \bibfield  {author} {\bibinfo {author} {\bibfnamefont {S.}~\bibnamefont
  {Ospelkaus}}, \bibinfo {author} {\bibfnamefont {A.}~\bibnamefont {Pe’Er}},
  \bibinfo {author} {\bibfnamefont {K.-K.}\ \bibnamefont {Ni}}, \bibinfo
  {author} {\bibfnamefont {J.}~\bibnamefont {Zirbel}}, \bibinfo {author}
  {\bibfnamefont {B.}~\bibnamefont {Neyenhuis}}, \bibinfo {author}
  {\bibfnamefont {S.}~\bibnamefont {Kotochigova}}, \bibinfo {author}
  {\bibfnamefont {P.}~\bibnamefont {Julienne}}, \bibinfo {author}
  {\bibfnamefont {J.}~\bibnamefont {Ye}},\ and\ \bibinfo {author}
  {\bibfnamefont {D.}~\bibnamefont {Jin}},\ }\bibfield  {title} {\bibinfo
  {title} {Efficient state transfer in an ultracold dense gas of heteronuclear
  molecules},\ }\href@noop {} {\bibfield  {journal} {\bibinfo  {journal}
  {Nature Physics}\ }\textbf {\bibinfo {volume} {4}},\ \bibinfo {pages} {622}
  (\bibinfo {year} {2008})}\BibitemShut {NoStop}%
\bibitem [{\citenamefont {Moses}\ \emph {et~al.}(2017)\citenamefont {Moses},
  \citenamefont {Covey}, \citenamefont {Miecnikowski}, \citenamefont {Jin},\
  and\ \citenamefont {Ye}}]{dense4}%
  \BibitemOpen
  \bibfield  {author} {\bibinfo {author} {\bibfnamefont {S.~A.}\ \bibnamefont
  {Moses}}, \bibinfo {author} {\bibfnamefont {J.~P.}\ \bibnamefont {Covey}},
  \bibinfo {author} {\bibfnamefont {M.~T.}\ \bibnamefont {Miecnikowski}},
  \bibinfo {author} {\bibfnamefont {D.~S.}\ \bibnamefont {Jin}},\ and\ \bibinfo
  {author} {\bibfnamefont {J.}~\bibnamefont {Ye}},\ }\bibfield  {title}
  {\bibinfo {title} {New frontiers for quantum gases of polar molecules},\
  }\href@noop {} {\bibfield  {journal} {\bibinfo  {journal} {Nature Physics}\
  }\textbf {\bibinfo {volume} {13}},\ \bibinfo {pages} {13} (\bibinfo {year}
  {2017})}\BibitemShut {NoStop}%
\bibitem [{\citenamefont {Lesanovsky}\ \emph {et~al.}(2019)\citenamefont
  {Lesanovsky}, \citenamefont {Olmos}, \citenamefont {Guerin},\ and\
  \citenamefont {Kaiser}}]{dense5}%
  \BibitemOpen
  \bibfield  {author} {\bibinfo {author} {\bibfnamefont {I.}~\bibnamefont
  {Lesanovsky}}, \bibinfo {author} {\bibfnamefont {B.}~\bibnamefont {Olmos}},
  \bibinfo {author} {\bibfnamefont {W.}~\bibnamefont {Guerin}},\ and\ \bibinfo
  {author} {\bibfnamefont {R.}~\bibnamefont {Kaiser}},\ }\bibfield  {title}
  {\bibinfo {title} {Dressed dense atomic gases},\ }\href@noop {} {\bibfield
  {journal} {\bibinfo  {journal} {Phys. Rev. A}\ }\textbf {\bibinfo {volume}
  {100}},\ \bibinfo {pages} {021401} (\bibinfo {year} {2019})}\BibitemShut
  {NoStop}%
\bibitem [{\citenamefont {Ebeling}\ \emph {et~al.}(2017)\citenamefont
  {Ebeling}, \citenamefont {Fortov},\ and\ \citenamefont {Filinov}}]{astro}%
  \BibitemOpen
  \bibfield  {author} {\bibinfo {author} {\bibfnamefont {W.}~\bibnamefont
  {Ebeling}}, \bibinfo {author} {\bibfnamefont {V.~E.}\ \bibnamefont
  {Fortov}},\ and\ \bibinfo {author} {\bibfnamefont {V.}~\bibnamefont
  {Filinov}},\ }\href@noop {} {\emph {\bibinfo {title} {Quantum Statistics of
  Dense Gases and Nonideal Plasmas}}}\ (\bibinfo  {publisher} {Springer Series
  in Plasma Science and Technology, Spinger Cham},\ \bibinfo {year}
  {2017})\BibitemShut {NoStop}%
\bibitem [{\citenamefont {Holzmann}\ \emph {et~al.}(2016)\citenamefont
  {Holzmann}, \citenamefont {{Clay III}}, \citenamefont {Morales},
  \citenamefont {Tubman}, \citenamefont {Ceperley},\ and\ \citenamefont
  {Pierleoni}}]{ceper}%
  \BibitemOpen
  \bibfield  {author} {\bibinfo {author} {\bibfnamefont {M.}~\bibnamefont
  {Holzmann}}, \bibinfo {author} {\bibfnamefont {R.~C.}\ \bibnamefont {{Clay
  III}}}, \bibinfo {author} {\bibfnamefont {M.~A.}\ \bibnamefont {Morales}},
  \bibinfo {author} {\bibfnamefont {N.~M.}\ \bibnamefont {Tubman}}, \bibinfo
  {author} {\bibfnamefont {D.~M.}\ \bibnamefont {Ceperley}},\ and\ \bibinfo
  {author} {\bibfnamefont {C.}~\bibnamefont {Pierleoni}},\ }\bibfield  {title}
  {\bibinfo {title} {Theory of finite size effects for electronic quantum
  {Monte} {Carlo} calculations of liquids and solids},\ }\href@noop {}
  {\bibfield  {journal} {\bibinfo  {journal} {Phys. Rev. B}\ }\textbf {\bibinfo
  {volume} {94}},\ \bibinfo {pages} {035126} (\bibinfo {year}
  {2016})}\BibitemShut {NoStop}%
\bibitem [{\citenamefont {Delle~Site}\ \emph {et~al.}(2017)\citenamefont
  {Delle~Site}, \citenamefont {Ciccotti},\ and\ \citenamefont
  {Hartmann}}]{jstatmecc}%
  \BibitemOpen
  \bibfield  {author} {\bibinfo {author} {\bibfnamefont {L.}~\bibnamefont
  {Delle~Site}}, \bibinfo {author} {\bibfnamefont {G.}~\bibnamefont
  {Ciccotti}},\ and\ \bibinfo {author} {\bibfnamefont {C.}~\bibnamefont
  {Hartmann}},\ }\bibfield  {title} {\bibinfo {title} {Partitioning a
  macroscopic system into independent subsystems},\ }\href@noop {} {\bibfield
  {journal} {\bibinfo  {journal} {Journal of Statistical Mechanics: Theory and
  Experiment}\ }\textbf {\bibinfo {volume} {2017}} (\bibinfo {year}
  {2017})}\BibitemShut {NoStop}%
\bibitem [{\citenamefont {Reible}\ \emph {et~al.}(2022)\citenamefont {Reible},
  \citenamefont {Hartmann},\ and\ \citenamefont {Delle~Site}}]{lmp}%
  \BibitemOpen
  \bibfield  {author} {\bibinfo {author} {\bibfnamefont {B.}~\bibnamefont
  {Reible}}, \bibinfo {author} {\bibfnamefont {C.}~\bibnamefont {Hartmann}},\
  and\ \bibinfo {author} {\bibfnamefont {L.}~\bibnamefont {Delle~Site}},\
  }\bibfield  {title} {\bibinfo {title} {Two-sided bogoliubov inequality for
  quantum systems to estimate finite size effects in quantum molecular
  simulations},\ }\href@noop {} {\bibfield  {journal} {\bibinfo  {journal}
  {Lett. Math. Phys.}\ }\textbf {\bibinfo {volume} {112}},\ \bibinfo {pages}
  {97} (\bibinfo {year} {2022})}\BibitemShut {NoStop}%
\bibitem [{\citenamefont {Reible}\ \emph {et~al.}(2023)\citenamefont {Reible},
  \citenamefont {Hille}, \citenamefont {Hartmann},\ and\ \citenamefont
  {Delle~Site}}]{prr}%
  \BibitemOpen
  \bibfield  {author} {\bibinfo {author} {\bibfnamefont {B.~M.}\ \bibnamefont
  {Reible}}, \bibinfo {author} {\bibfnamefont {J.~F.}\ \bibnamefont {Hille}},
  \bibinfo {author} {\bibfnamefont {C.}~\bibnamefont {Hartmann}},\ and\
  \bibinfo {author} {\bibfnamefont {L.}~\bibnamefont {Delle~Site}},\ }\bibfield
   {title} {\bibinfo {title} {Finite size effects and thermodynamic accuracy in
  many-particle systems},\ }\href@noop {} {\bibfield  {journal} {\bibinfo
  {journal} {Phys. Rev. Res.}\ }\textbf {\bibinfo {volume} {5}},\ \bibinfo
  {pages} {023156} (\bibinfo {year} {2023})}\BibitemShut {NoStop}%
\bibitem [{\citenamefont {Spada}\ \emph {et~al.}(2022)\citenamefont {Spada},
  \citenamefont {Giorgini},\ and\ \citenamefont {Pilati}}]{entropy}%
  \BibitemOpen
  \bibfield  {author} {\bibinfo {author} {\bibfnamefont {G.}~\bibnamefont
  {Spada}}, \bibinfo {author} {\bibfnamefont {S.}~\bibnamefont {Giorgini}},\
  and\ \bibinfo {author} {\bibfnamefont {S.}~\bibnamefont {Pilati}},\
  }\href@noop {} {\bibfield  {journal} {\bibinfo  {journal} {Condensed Matter}\
  }\textbf {\bibinfo {volume} {7}},\ \bibinfo {pages} {30} (\bibinfo {year}
  {2022})}\BibitemShut {NoStop}%
\bibitem [{\citenamefont {Mueller}(2004)}]{confin}%
  \BibitemOpen
  \bibfield  {author} {\bibinfo {author} {\bibfnamefont {E.~J.}\ \bibnamefont
  {Mueller}},\ }\bibfield  {title} {\bibinfo {title} {Density proﬁle of a
  harmonically trapped ideal {Fermi} gas in arbitrary dimension},\ }\href@noop
  {} {\bibfield  {journal} {\bibinfo  {journal} {Phys. Rev. Lett.}\ }\textbf
  {\bibinfo {volume} {93}},\ \bibinfo {pages} {190404} (\bibinfo {year}
  {2004})}\BibitemShut {NoStop}%
\bibitem [{\citenamefont {Widom}(1963)}]{ins1}%
  \BibitemOpen
  \bibfield  {author} {\bibinfo {author} {\bibfnamefont {B.}~\bibnamefont
  {Widom}},\ }\bibfield  {title} {\bibinfo {title} {Some topics in the theory
  of fluids},\ }\href@noop {} {\bibfield  {journal} {\bibinfo  {journal} {J.
  Chem. Phys.}\ }\textbf {\bibinfo {volume} {39}},\ \bibinfo {pages} {2808}
  (\bibinfo {year} {1963})}\BibitemShut {NoStop}%
\bibitem [{\citenamefont {Tuckerman}(2010)}]{tuckerman}%
  \BibitemOpen
  \bibfield  {author} {\bibinfo {author} {\bibfnamefont {M.~E.}\ \bibnamefont
  {Tuckerman}},\ }\href@noop {} {\emph {\bibinfo {title} {Statistical
  mechanics: {T}heory and molecular simulation}}}\ (\bibinfo  {publisher}
  {Oxford University Press},\ \bibinfo {address} {New York},\ \bibinfo {year}
  {2010})\BibitemShut {NoStop}%
\bibitem [{\citenamefont {Frenkel}\ and\ \citenamefont {Smit}(1996)}]{frenkel}%
  \BibitemOpen
  \bibfield  {author} {\bibinfo {author} {\bibfnamefont {D.}~\bibnamefont
  {Frenkel}}\ and\ \bibinfo {author} {\bibfnamefont {B.}~\bibnamefont {Smit}},\
  }\href@noop {} {\emph {\bibinfo {title} {Understanding Molecular
  Simulation}}}\ (\bibinfo  {publisher} {Academic Press},\ \bibinfo {year}
  {1996})\BibitemShut {NoStop}%
\bibitem [{\citenamefont {Zwanzig}(1954)}]{zwanz}%
  \BibitemOpen
  \bibfield  {author} {\bibinfo {author} {\bibfnamefont {R.~W.}\ \bibnamefont
  {Zwanzig}},\ }\bibfield  {title} {\bibinfo {title} {High-temperature equation
  of state by a perturbation method {I.} nonpolar gases},\ }\href@noop {}
  {\bibfield  {journal} {\bibinfo  {journal} {J. Chem. Phys.}\ }\textbf
  {\bibinfo {volume} {22}},\ \bibinfo {pages} {1420} (\bibinfo {year}
  {1954})}\BibitemShut {NoStop}%
\bibitem [{\citenamefont {Izyurov}\ and\ \citenamefont
  {Khristoforov}(2022)}]{izyurov2022asymptotics}%
  \BibitemOpen
  \bibfield  {author} {\bibinfo {author} {\bibfnamefont {K.}~\bibnamefont
  {Izyurov}}\ and\ \bibinfo {author} {\bibfnamefont {M.}~\bibnamefont
  {Khristoforov}},\ }\bibfield  {title} {\bibinfo {title} {Asymptotics of the
  determinant of discrete {Laplacians} on triangulated and quadrangulated
  surfaces},\ }\href@noop {} {\bibfield  {journal} {\bibinfo  {journal}
  {Communications in Mathematical Physics}\ }\textbf {\bibinfo {volume}
  {394}},\ \bibinfo {pages} {531} (\bibinfo {year} {2022})}\BibitemShut
  {NoStop}%
\bibitem [{\citenamefont {Dembo}\ and\ \citenamefont
  {Zeitouni}(2009)}]{dembo2009large}%
  \BibitemOpen
  \bibfield  {author} {\bibinfo {author} {\bibfnamefont {A.}~\bibnamefont
  {Dembo}}\ and\ \bibinfo {author} {\bibfnamefont {O.}~\bibnamefont
  {Zeitouni}},\ }\href@noop {} {\emph {\bibinfo {title} {Large Deviations
  Techniques and Applications}}}\ (\bibinfo  {publisher} {Springer,
  Berlin/Heidelberg},\ \bibinfo {year} {2009})\BibitemShut {NoStop}%
\bibitem [{\citenamefont {Golub}\ and\ \citenamefont
  {Van~Loan}(1996)}]{Golub1996}%
  \BibitemOpen
  \bibfield  {author} {\bibinfo {author} {\bibfnamefont {G.~H.}\ \bibnamefont
  {Golub}}\ and\ \bibinfo {author} {\bibfnamefont {C.~F.}\ \bibnamefont
  {Van~Loan}},\ }\href@noop {} {\emph {\bibinfo {title} {Matrix
  Computations}}},\ \bibinfo {edition} {3rd}\ ed.\ (\bibinfo  {publisher} {The
  Johns Hopkins University Press},\ \bibinfo {year} {1996})\BibitemShut
  {NoStop}%
\end{thebibliography}%
\end{document}